\newtheorem{corollary*}{Corollary}
\newtheorem{remark}{Remark}
\newtheorem{example}{Example}
\newtheorem{lemma}{Lemma}
\newtheorem*{lemma*}{Lemma}
\begin{document}

\title{An Inequality for the Correlation of Two Functions Operating on Symmetric Bivariate Normal Variables}

\author{Ran~Hadad, Uri~Erez, and Yaming~Yu
\thanks{R.~Hadad and U.~Erez with the Department of Electrical Engineering-Systems, Tel Aviv University, Ramat Aviv 69978, Israel;
   (e-mails: ranhadad@post.tau.ac.il, uri@eng.tau.ac.il).}
\thanks{Y.~Yu is with the Department of Statistics, University of California, Irvine, CA 92697, USA;
(e-mail: yamingy@uci.edu).}
}
\maketitle

\begin{abstract}
An inequality is derived for the correlation of two univariate functions operating on symmetric bivariate normal random variables. The inequality is a simple consequence of the Cauchy-Schwarz inequality.
\end{abstract}
\section{Introduction}
Statistical characterization of the output of non-linear systems operating on stochastic processes is in general difficult.
Nonetheless, when the input process is Gaussian and the system is a memoryless non-linearity, several particularly simple and useful properties are known. Among these are Bussgang's theorem ~\cite{Bussgang} and its generalizations (e.g., \cite{price1958useful,mcgee1969circularly}), results concerning the maximal correlation coefficient  \cite{lancaster1957some}, as well as results on the output distortion-to-signal power ratio \cite{rowe1982memoryless}.
In the present note, we describe another simple result as described in the next section.
A generalization to a more general class of random variables is described in Section~\ref{sec:generalization}.
An application is presented in Section~\ref{sec:app}.

\section{Statement of Result and Proof}
\begin{lemma}
\label{lemma1}
Let $Z_1$ and $Z_2$ be zero-mean bivariate normal random variables with variance $\sigma^2$ and correlation coefficient $\rho > 0$.
Then,
\begin{align*}
\mathbb{E}^2  \left[ g_1(Z_1) g_2(Z_2) \right] \leq \mathbb{E}  \left[ g_1(Z_1) g_1(Z_2) \right] \mathbb{E}\left[ g_2(Z_1) g_2(Z_2) \right],
\end{align*}
 for any $g_1$ and $g_2$ for which the expectations exist, with equality if and only if $g_1$ and $g_2$ are equal up to a multiplicative constant.
\end{lemma}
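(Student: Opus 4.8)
The plan is to recognize the symmetric bivariate normal as defining a symmetric, positive (semi-)definite integral kernel and to view each expectation as an inner product in the associated RKHS, so that the claimed inequality becomes a Cauchy–Schwarz inequality. Concretely, let $p(z_1,z_2)$ be the joint density of $(Z_1,Z_2)$ and write $K(z_1,z_2) = p(z_1,z_2)/(\varphi(z_1)\varphi(z_2))$, where $\varphi$ is the marginal $N(0,\sigma^2)$ density. Since $\rho>0$ (this is where positivity is used), the Mehler kernel expansion $K(z_1,z_2) = \sum_{n\ge 0} \rho^n H_n(z_1) H_n(z_2)$ in normalized Hermite polynomials $\{H_n\}$ shows $K$ is a positive-definite kernel with strictly positive eigenvalues $\rho^n$. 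For a function $g$, set $a_n = a_n(g) = \mathbb{E}[g(Z)H_n(Z)]$ under the marginal, i.e. its Hermite coefficients. Then expanding,
\begin{align*}
\mathbb{E}[g_i(Z_1)g_j(Z_2)] = \sum_{n\ge 0} \rho^n\, a_n(g_i)\, a_n(g_j),
\end{align*}
provided the expectations exist (a dominated-convergence / $L^2$ argument is needed to justify interchanging sum and integral, using that $\{H_n\}$ is a complete orthonormal system for the marginal Gaussian and $\rho<1$).

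Once this representation is in hand, the statement is immediate: define the bilinear form $\langle g_i, g_j\rangle := \sum_{n} \rho^n a_n(g_i) a_n(g_j)$ on the (weighted $\ell^2$) space of Hermite-coefficient sequences with weights $\rho^n>0$. This is a genuine inner product, so Cauchy–Schwarz gives
\begin{align*}
\langle g_1, g_2\rangle^2 \le \langle g_1,g_1\rangle\,\langle g_2,g_2\rangle,
\end{align*}
which is exactly the asserted inequality. Equality in Cauchy–Schwarz holds iff the coefficient sequences $(a_n(g_1))$ and $(a_n(g_2))$ are proportional, i.e. $a_n(g_1) = c\, a_n(g_2)$ for all $n$ and some constant $c$; by completeness of the Hermite basis this is equivalent to $g_1 = c\, g_2$ (a.e.), giving the equality condition.

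The main obstacle I anticipate is analytic rather than algebraic: justifying the termwise expansion $\mathbb{E}[g_i(Z_1)g_j(Z_2)] = \sum_n \rho^n a_n(g_i)a_n(g_j)$ under only the hypothesis that the relevant expectations exist, rather than assuming $g_i \in L^2(\varphi)$. One clean way around this is to avoid the Hermite expansion entirely and argue directly: write $\mathbb{E}[g_i(Z_1)g_j(Z_2)] = \mathbb{E}\big[ (T g_i)(Z)\,(T g_j)(Z)\big]$ where $(Tg)(z) = \mathbb{E}[g(Z_2)\mid Z_1 = z]$ is the Gaussian (Ornstein–Uhlenbeck-type) conditional-expectation operator, using the tower property and the fact that, for the symmetric bivariate normal, this operator is self-adjoint and has a self-adjoint positive square root $S = T^{1/2}$ (again relying on $\rho>0$). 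Then $\mathbb{E}[g_i(Z_1)g_j(Z_2)] = \langle S g_i, S g_j\rangle_{L^2(\varphi)}$, and the inequality together with its equality case follows from ordinary Cauchy–Schwarz in $L^2(\varphi)$; the positivity $\rho > 0$ is precisely what guarantees $S$ exists, and injectivity of $S$ (its eigenvalues $\rho^{n/2}$ are all nonzero) yields the ``up to a multiplicative constant'' characterization. I would present the Hermite-coefficient version as the main line of argument since it is the most transparent, and treat the integrability bookkeeping with a truncation argument.
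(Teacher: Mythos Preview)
Your main argument---Mehler's expansion of the bivariate normal density, Hermite-coefficient representation of $\mathbb{E}[g_i(Z_1)g_j(Z_2)]$, and Cauchy--Schwarz on the resulting weighted $\ell^2$ sequences---is exactly the paper's proof, differing only in the normalization convention for the Hermite polynomials. Your alternative conditional-expectation route coincides with the method the paper uses for its generalization (its Lemma~2), with your square-root refinement recovering the equality case; note one slip there: the tower property gives $\mathbb{E}[g_i(Z_1)g_j(Z_2)]=\langle g_i, Tg_j\rangle_{L^2(\varphi)}$, not $\mathbb{E}[(Tg_i)(Z)(Tg_j)(Z)]$, though your final identity $\langle Sg_i,Sg_j\rangle_{L^2(\varphi)}$ with $S=T^{1/2}$ is correct.
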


Note that for $\rho=1$ the lemma reduces to the the standard probabilistic Cauchy-Schwarz inequality.
\begin{proof}
Consider the inner product of real functions defined by
\begin{align*}
\langle f,g\rangle \triangleq \mathbb{E}\left[f(Z)g(Z)\right]
\end{align*}
where $Z\sim \mathcal{N}(0,\sigma^2)$.

According to Mehler's formula~\cite{mehler1866ueber} (see also \cite{kibble1945extension}) the joint density function of $(Z_1,Z_2)$ may be written as,
\begin{align}\label{jointexpansion}
f_{Z_1,Z_2}(z_1,z_2) &= \frac{1}{2\pi\sigma^2\sqrt{1-\rho^2}}  e^{-\frac{1}{2}\frac{z_1^2-2 \rho z_1 z_2 + z_2^2}{\sigma^2(1-\rho^2)}} \nonumber\\
&= \frac{1}{2\pi\sigma^2}  e^{-\frac{1}{2}\frac{z_1^2+z_2^2}{\sigma^2}} \sum_{n=0}^{\infty}\frac{1}{n!}H\!e_n\big(\tfrac{z_1}{\sigma}\big)H\!e_n\big(\tfrac{z_2}{\sigma}\big)\rho^n,
\end{align}
where $H\!e_n(x)$ are the probabilists' Hermite polynomials defined as,
\begin{align*}
H\!e_n(x) = (-1)^ne^{\frac{x^2}{2}}\frac{d^n}{dx^n}e^{-\frac{x^2}{2}}, \quad n\geq0.
\end{align*}

The Hermite polynomials constitute a complete orthogonal basis of polynomials with respect to the standard normal probability density function
~\cite{szeg1939orthogonal}, so that
\begin{align*}
\langle \widetilde{H}\!e_{n},\widetilde{H}\!e_{m} \rangle = \mathbb{E}\big[\widetilde{H}\!e_{n}(Z)\widetilde{H}\!e_{m}(Z)\big] = n!\,\delta_{n,m}
\end{align*}
where $\widetilde{H}\!e_{n}(x) = H\!e_{n}(\tfrac{x}{\sigma})$, and $\delta_{n,m}$ is the Kronecker delta function.

Let $a_{g_i,n} = \langle g_i,\widetilde{H}\!e_n\rangle$. The function $g_i(x)$ may be represented by the series
\begin{align}\label{gexpansion}
  g_i(x) = \sum\limits_{n=0}^{\infty}\frac{1}{n!}a_{g_i,n}H\!e_n(\tfrac{x}{\sigma})\;,\;i = 1,2.
\end{align}
The following expectations are obtained by applying~\eqref{jointexpansion}:
\begin{align*}
&\mathbb{E}\left[g_1(Z_1) g_2(Z_2)\right]=\sum\limits_{n=0}^{\infty}\frac{1}{n!}a_{g_1,n}a_{g_2,n}\rho^n, \\
&\mathbb{E}\left[g_i(Z_1) g_i(Z_2)\right]=\sum\limits_{n=0}^{\infty}\frac{1}{n!}a_{g_i,n}^2\rho^n\;,\;i = 1,2.
\end{align*}
Using these identities and assuming $\rho>0$ we have,
\begin{align}
\mathbb{E}^2\left[g_1(Z_1) g_2(Z_2)\right] &= \left(\sum\limits_{n=0}^{\infty}\frac{a_{g_1,n}a_{g_2,n}}{n!}\rho^n\right)^2 \nonumber \\
&=\left(\sum\limits_{n=0}^{\infty}\frac{a_{g_1,n}\,\rho^{\tfrac{n}{2}}}{\sqrt{n!}}\frac{a_{g_2,n}\,\rho^{\tfrac{n}{2}}}{\sqrt{n!}}\right)^2\nonumber \\
&\leq\left(\sum\limits_{n=0}^{\infty}\frac{a_{g_1,n}^2\,\rho^n}{n!}\right)\left(\sum\limits_{n=0}^{\infty}\frac{a_{g_2,n}^2\,\rho^{n}}{n!}\right)\label{sqaured}\\
&=\mathbb{E}\left[g_1(Z_1) g_1(Z_2)\right]\mathbb{E}\left[g_2(Z_1) g_2(Z_2)\right]\nonumber,
\end{align}
where the inequality follows by the Cauchy–Schwarz inequality for sequences, which holds with equality only when 
$a_{g_1,n} = c\cdot a_{g_2,n}$ for some constant $c$, $\rho >0$, and for all $n\geq0$. Since the case of equality holds only when the coefficients in the series~\eqref{gexpansion} are equal up to a multiplicative constant, it follows that equality holds only when $g_1$ and $g_2$ are equal up to a multiplicative constant.
\end{proof}


When both functions are even or odd, we may apply the lemma to $Z_1$ and $-Z_2$ to obtain,
\begin{corollary*}
Let $Z_1$ and $Z_2$ be zero-mean bivariate normal random variables with variance $\sigma^2$ and correlation coefficient $\rho\neq0$.
Then, for $g_1$ and $g_2$ that are both even or odd functions,
\begin{align*}
\mathbb{E}^2\left[g_1(Z_1) g_2(Z_2)\right] \leq \mathbb{E}\left[g_1(Z_1) g_1(Z_2)\right] \mathbb{E}\left[g_2(Z_1) g_2(Z_2)\right],
 \end{align*}
for any such $g_1$ and $g_2$ for which the expectations exist, with equality if and only if $g_1$ and $g_2$ are equal up to a multiplicative constant.
\end{corollary*}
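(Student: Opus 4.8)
The plan is to deduce the corollary from the Lemma by a change of sign, using the parity of $g_1$ and $g_2$ to absorb the resulting sign changes. When $\rho>0$ there is nothing to do: the Lemma already applies to arbitrary $g_1,g_2$ for which the expectations exist --- in particular to even or odd ones --- and its equality clause gives the stated characterization. So the only case requiring work is $\rho<0$.

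Suppose then $\rho<0$. I would set $\widetilde Z_2 \triangleq -Z_2$; the pair $(Z_1,\widetilde Z_2)$ is again zero-mean bivariate normal with variance $\sigma^2$, but now with correlation coefficient $-\rho>0$, so the Lemma yields
\begin{align*}
\mathbb{E}^2\big[g_1(Z_1) g_2(\widetilde Z_2)\big] \leq \mathbb{E}\big[g_1(Z_1) g_1(\widetilde Z_2)\big]\,\mathbb{E}\big[g_2(Z_1) g_2(\widetilde Z_2)\big].
\end{align*}
Writing $g_i(\widetilde Z_2)=g_i(-Z_2)=\varepsilon\, g_i(Z_2)$, where $\varepsilon=+1$ if $g_1,g_2$ are both even and $\varepsilon=-1$ if both are odd, the single factor $\varepsilon$ on the left disappears upon squaring, while the two factors of $\varepsilon$ on the right multiply to $1$; hence the inequality becomes exactly the assertion of the corollary. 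Note also that, since $g_i(-Z_2)=\varepsilon g_i(Z_2)$, every expectation arising for $(Z_1,\widetilde Z_2)$ differs from a corresponding expectation in the corollary only by a sign, so the integrability hypothesis transfers and no ill-defined expectation is introduced by the reduction.

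For the equality case I would simply observe that the Lemma's equality condition --- that $g_1$ and $g_2$ agree up to a multiplicative constant --- is a statement about the functions alone and is insensitive to the substitution $Z_2\mapsto -Z_2$, so equality in the corollary holds precisely under the same condition. I do not anticipate a genuine obstacle here; the only care needed is to carry the sign $\varepsilon$ uniformly through both the even and the odd cases and to confirm the integrability transfer just mentioned. The one real limitation --- that $g_1$ and $g_2$ must have the same parity, a mixed even/odd pair being excluded --- is already built into the hypotheses.
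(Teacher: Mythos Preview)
Your proposal is correct and follows exactly the paper's approach: the paper's entire argument for the corollary is the single remark that one may apply the lemma to $Z_1$ and $-Z_2$, which is precisely the substitution $\widetilde Z_2=-Z_2$ you carry out, with the parity of $g_1,g_2$ absorbing the resulting signs. Your write-up simply spells out the details (the $\varepsilon$ bookkeeping, the integrability transfer, and the equality case) that the paper leaves implicit.
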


\begin{remark}
It is interesting to contrast the lemma with the maximal correlation property of normal vectors.
Specifically, consider the case where $Z_1$ and $Z_2$ are zero mean and both $g_1$ and $g_2$ are odd, so that $\mathbb{E}\left[g_i(Z_j)\right]=0$, for all $i=1,2$ and $j=1,2$.
Then the maximal correlation property \cite{lancaster1957some} yields the inequality
\begin{align*}
\mathbb{E}^2  \left[ g_1(Z_1) g_2(Z_2) \right] \leq \rho^2 \mathbb{E}  \left[ g_1^2(Z_i)  \right] \mathbb{E}\left[ g_2^2(Z_i) \right].
\end{align*}
\end{remark}

\section{Generalization}
\label{sec:generalization}
The lemma may be generalized to a broader class of random variables $(Z_1,Z_2)$, where $Z_1$ and $Z_2$ result from passing some random variable $Z$ through two independent realizations of the same ``channel" (in information-theoretic terms). This generalization is stated in the next lemma. We note however that unlike Lemma~\ref{lemma1}, the ``only if'' condition for equality does not necessarily apply.
\begin{lemma}
\label{lemma_generalization}
Let $Z_1$ and $Z_2$ be random variables such that for some random variable $Z$, 
$Z_1$ and $Z_2$ are independent and identically distributed given $Z$.
Then,
\begin{align*}
\mathbb{E}^2  \left[ g_1(Z_1) g_2(Z_2) \right] \leq \mathbb{E}  \left[ g_1(Z_1) g_1(Z_2) \right] \mathbb{E}\left[ g_2(Z_1) g_2(Z_2) \right],
\end{align*}
 for any $g_1$ and $g_2$ for which the expectations exist, with equality if (but not necessarily only if) $g_1$ and $g_2$ are equal up to a multiplicative constant.
\end{lemma}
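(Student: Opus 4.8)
The plan is to condition on $Z$ and exploit the conditional independence directly, reducing the claim to a one-dimensional Cauchy--Schwarz inequality in a suitable $L^2$ space. First, for $i=1,2$ define the function
\[
\phi_i(z) \triangleq \mathbb{E}\!\left[g_i(Z_1)\,\middle|\,Z=z\right].
\]
Because $Z_1$ and $Z_2$ are conditionally i.i.d.\ given $Z$, the tower property gives, for any $i,j\in\{1,2\}$,
\[
\mathbb{E}\!\left[g_i(Z_1)\,g_j(Z_2)\right]
=\mathbb{E}\!\left[\mathbb{E}\!\left[g_i(Z_1)\,\middle|\,Z\right]\,\mathbb{E}\!\left[g_j(Z_2)\,\middle|\,Z\right]\right]
=\mathbb{E}\!\left[\phi_i(Z)\,\phi_j(Z)\right],
\]
where the middle equality uses that $g_i(Z_1)$ and $g_j(Z_2)$ are conditionally independent and that $Z_2$ has the same conditional law as $Z_1$.

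Next, introduce the inner product $\langle u,v\rangle \triangleq \mathbb{E}\!\left[u(Z)\,v(Z)\right]$ on functions of $Z$ (which is positive semidefinite, so the Cauchy--Schwarz inequality applies even without definiteness). The three expectations appearing in the lemma are then exactly $\langle \phi_1,\phi_2\rangle$, $\langle\phi_1,\phi_1\rangle$, and $\langle\phi_2,\phi_2\rangle$, and the desired inequality
\[
\langle\phi_1,\phi_2\rangle^2 \leq \langle\phi_1,\phi_1\rangle\,\langle\phi_2,\phi_2\rangle
\]
is immediate. For the sufficiency of the equality condition, observe that if $g_1 = c\,g_2$ pointwise then $\phi_1 = c\,\phi_2$, whence both sides coincide; this establishes the ``if'' direction. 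The reason the ``only if'' direction fails in general is that the map $g_i\mapsto\phi_i$ need not be injective — distinct $g_1,g_2$ can induce the same conditional mean function (or proportional ones) whenever the channel $Z\mapsto Z_1$ loses information — so equality in the final Cauchy--Schwarz step no longer forces $g_1$ and $g_2$ to be proportional.

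There is essentially no hard step here: the only point requiring care is making sure the conditioning manipulations are legitimate, i.e.\ that the conditional expectations $\phi_i(Z)$ are well defined and square-integrable whenever the three expectations in the statement are assumed to exist. This follows from the conditional Jensen/Cauchy--Schwarz inequality, $\mathbb{E}[\phi_i(Z)^2]\le \mathbb{E}[g_i(Z_1)g_i(Z_2)]<\infty$, using conditional independence again. One should also note that Lemma~\ref{lemma1} is recovered as a special case by taking $Z$ to be a scalar Gaussian and each of $Z_1,Z_2$ to be $Z$ plus independent Gaussian noise, since a symmetric bivariate normal pair with $\rho>0$ admits exactly such a representation.
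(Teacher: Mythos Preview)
Your proof is correct and follows exactly the same approach as the paper: define $\phi_i(z)=\mathbb{E}[g_i(Z_1)\mid Z=z]$ (the paper calls these $h_i$), use conditional independence and the tower property to rewrite each mixed expectation as $\mathbb{E}[\phi_i(Z)\phi_j(Z)]$, and then apply Cauchy--Schwarz in $L^2$ of $Z$. Your additional remarks on why the ``only if'' direction can fail and on recovering Lemma~\ref{lemma1} as a special case are accurate elaborations, though note that your integrability bound $\mathbb{E}[\phi_i(Z)^2]\le\mathbb{E}[g_i(Z_1)g_i(Z_2)]$ is in fact an equality by the very computation you just performed.
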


\begin{proof}
Denote for $i=1,2$,
\begin{align*}
h_i(z) &=\mathbb{E}\left[g_i(Z_1)|Z=z\right] \\
       &=\mathbb{E}\left[g_i(Z_2)|Z=z\right]
\end{align*}
since $Z_1$ and $Z_2$ are i.i.d. given $Z$. Then,
\begin{align*}
\mathbb{E}\left[g_1(Z_1)g_2(Z_2)\right]&= \mathbb{E}\left[  \mathbb{E}\left[g_1(Z_1)g_2(Z_2)|Z \right]\right]   \\
       &=\mathbb{E}\left[h_1(Z) h_2(Z) \right].
\end{align*}
Following the same steps, we also have,
\begin{align*}
\mathbb{E}\left[g_i(Z_1)g_i(Z_2)\right]&= \mathbb{E}\left[h_i^2(Z) \right]\;,\;i = 1,2.
\end{align*}
The claim now follows by applying the Cauchy-Schwarz inequality to obtain
\begin{align*}
\mathbb{E}^2\left[h_1(Z)h_2(Z)\right] \leq \mathbb{E}\left[h_1^2(Z)\right] \mathbb{E}\left[h_2^2(Z)\right].
\end{align*}
\end{proof}

\begin{example}
Let $Z$ be a vector of length $N$ whose entries are i.i.d. Bernoulli$(p)$ and similarly, let $W_1$ and $W_2$ be two independent random vectors (of length $N$) whose entries are i.i.d. Bernoulli$(q)$. Finally, let $Z_i=Z \oplus W_i$ for $i=1,2$, where $\oplus$ denotes the binary exclusive or operation.
Then the lemma holds for any two functions $g_i:\{0,1\}^N \rightarrow \mathbb{R}$.
\end{example}

\begin{remark}
We note the method that is used in \cite{lancaster1957some} to prove the maximal correlation property of bivariate normal random variables utilizes series expansions involving (probabilists') Hermite polynomials via Mehler's formula, similar to the approach taken in Lemma~\ref{lemma1}. In contrast, the proof in Lemma~\ref{lemma_generalization} follows the approach taken in \cite{yu2008maximal}, where an alternative proof to the maximal correlation property is derived.
\end{remark}


\section{Application: A Criterion for Identification of A Memoryless Non-Linearity}
\label{sec:app}
We now consider an application of Lemma~\ref{lemma1}.
Consider a memoryless non-linearity $f$ operating on a discrete-time signal corrupted by additive white Gaussian noise (AWGN), as depicted in Figure~\ref{fig:Fig1}.
\begin{figure}[h]
\begin{center}
\tikzstyle{int}=[draw, minimum size=2.2em]
\tikzstyle{sum} = [draw, circle, inner sep=0.04cm]
\begin{tikzpicture}[scale=1.05, transform shape, thick, node distance=2cm,auto,>=latex']
    \node [sum] at (1,0) (s) {+};
    \node (s1) at (s) {\scalebox{2}{+}};
    \node (x) [left of=s,node distance=1.5cm] {$x_n$};
    \path[->] (x) edge (s);
    \node (w) [above of=s,node distance=1cm] {$w_n$};
    \path[->] (w) edge (s);
    \node [int] (f) [right of=s,node distance=1.5cm] {$f(\cdot)$};
    \path[->] (s) edge node {$z_n$} (f);
    \node [int] (g) [right of=f,node distance=1.8cm] {$g(\cdot)$};
    \path[->] (f) edge node {$y_n$} (g);
    \node (v) [right of=g,node distance=1.5cm] {$\hat{z}_n$};
    \path[->] (g) edge (v);
\end{tikzpicture}
\end{center}
\caption{Memoryless non-linearity operating on a signal corrupted by AWGN.}
\label{fig:Fig1}
\end{figure}
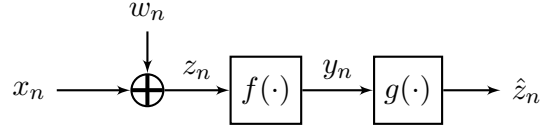 
Thus, the input $z_n$ consists of the sum of the signal $x_n$ and AWGN $w_n$ having variance $\sigma_w^2$. The output is thus,
\begin{align*}
y_n&=f(x_n+w_n).
\end{align*}
We assume that we observe both the input sequence $x_n$ as well as the output $y_n$. The function $f$ on the other hand is unknown and we wish to estimate it.

Let us consider first the case where it is known that $f$ is invertible.
A possible means to identify $f$ is as follows. Apply another function $g$ to the output to obtain
\begin{align*}
\hat{z}_n&=g(y_n) \\
&=g(f(z_n)) \\
&=h(z_n),
\end{align*}
where $h=g \circ f$ is the composition of the functions $f$ and $g$.
Assume now that $x_n$ is an AWGN process as well (i.e., a training sequence drawn according to such statistics) with variance $\sigma_x^2$.
We may further assume for simplicity that $\mathbb{E}[h(z_n)]=0$. Define
\begin{align*}
K_1&=\frac{ \mathbb{E}^2 [ h(z_n) x_n ] }  {  \mathbb{E}[h^2(z_n)] \mathbb{E}[x_n^2] }.
\end{align*}
Then, since for bivariate normal random variables nonlinear functions cannot increase (the absolute value of) correlation \cite{lancaster1957some}, it follows
that $K_1$ is maximized (only) when $g=c \cdot f^{-1}$ for some constant $c$, so that $h$ is a linear function. As $K_1$ may be estimated by replacing expectations with time averages, we have obtained a simple criterion for identification of the non-linearity $f$ (up to to a scale factor that may easily be subsequently estimated).

A limitation of the identification criterion described above, is that it does not apply to non-linearities that are not invertible.\footnote{Nonetheless, it's a suitable criterion for memoryless nonlinear compensation, where the inverse of the non-linearity (if exists) is desired (see, e.g.,~\cite{tsimbinos2001nonlinear}).}
We now outline how the inequality derived in this note may serve to overcome this limitation. We note, however, that a drawback of the system described next is that we need to assume
that the signal-to-noise ratio $\sigma_x^2/\sigma_w^2$ at the input of the non-linearity is known, unlike for the system described above.

Let $\alpha=\sqrt{\frac{\sigma_x^2+\sigma_w^2}{\sigma_x^2}}$ so that $\alpha x_n$ has the same variance as $z_n$.
Consider now passing $\alpha x_n$ through a non-linearity $g$ as depicted in Figure~\ref{fig:Fig2},
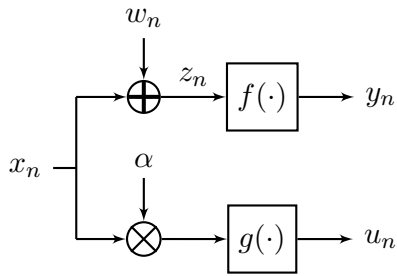
\begin{figure}[h]
\begin{center}
\tikzstyle{int}=[draw, minimum size=2.2em]
\tikzstyle{init} = [pin edge={to-,thin,black}]
\tikzstyle{times} = [draw, circle, inner sep=0.15cm]
\tikzstyle{sum} = [draw, circle, inner sep=0.04cm]
\begin{tikzpicture}[scale=1.05, transform shape, thick, node distance=2cm,auto,>=latex']
    \node (x) at (0,-0.9) {$x_n$};
    \node (xd) [right of=x,node distance=0.645cm, coordinate] {};
    \path[-]  (x) edge (xd);
    \draw[-]  (xd) --++ (0,0.9);
    \draw[-]  (xd) --++ (0,-0.9);

    \node [sum] at (1.5,0) (s) {+};
    \node (s1) at (s) {\scalebox{2}{+}};
    \node (x1) at (0.5,0) {};
    \path[->] (x1) edge (s);
    \node (w) [above of=s,node distance=1cm] {$w_n$};
    \path[->] (w) edge (s);
    \node [int] (f) [right of=s,node distance=1.5cm] {$f(\cdot)$};
    \path[->] (s) edge node {$z_n$} (f);
    \node (y) [right of=f,node distance=1.5cm] {$y_n$};
    \path[->] (f) edge (y);

    \node [times] at (1.5,-1.8) (t) {};
    \node (t1) at (t) {\scalebox{1.8}{$\times$}};
    \node (x2) at (0.5,-1.8) {};
    \path[->] (x2) edge (t);
    \node (sigma) [above of=t,node distance=1cm] {$\alpha$};
    \path[->] (sigma) edge (t);
    \node [int] (g1) [right of=t,node distance=1.5cm] {$g(\cdot)$};
    \path[->] (t) edge (g1);
    \node (u) [right of=g1,node distance=1.5cm] {$u_n$};
    \path[->] (g1) edge (u);
\end{tikzpicture}
\end{center}
\caption{A system for identification of a non-linearity.}
\label{fig:Fig2}
\end{figure} 
to obtain $u_n=g(\alpha x_n)$.
Define
\begin{align*}
K_2&=\frac{ \mathbb{E}^2 [f(z_n)g(\alpha x_n)] } { \mathbb{E}[g(z_n) g(\alpha x_n) ] }.
\end{align*}
It follows from Lemma~\ref{lemma1} that $K_2$ is  maximized \emph{only when} $g=c \cdot f$ for some constant $c$.\footnote{Note that the ``only when" property, which 
is crucial for the identification problem, follows by the condition for equality in Lemma~\ref{lemma1}.}  Again, $K_2$ may be computed by replacing expectations with time averages.
This is the case, as although we do not observe $w_n$, we can replace it with AWGN noise $w'_n$ generated with the same variance, to compute
\begin{align*}
\mathbb{E}[g(z_n) g(\alpha x_n)]=\mathbb{E}[g(x_n+w'_n) g(\alpha x_n)].
\end{align*}
We note that for a practical implementation of the scheme, one would need some explicit (parametric) representation for the function $g(\cdot)$. For instance, one could employ 
a series expansion in orthogonal polynomials (see, e.g., \cite{tsimbinos2001nonlinear}).
\section*{Acknowledgment}
The authors are grateful to Or Ordentlich for helpful discussions concerning the contents of Section~\ref{sec:app}.

\bibliographystyle{elsarticle-num}
%




%

\end{document}